\documentclass[aps,preprint]{revtex4}

\usepackage{amsfonts,amsmath,amssymb}
\usepackage{amsthm}
\usepackage{graphicx}
\usepackage{bm,bbm}
\usepackage[utf8]{inputenc}
\usepackage{url}

\newtheorem{theorem}{Theorem}[section]
\newtheorem{definition}{Definition}[section]

\begin{document}

\title{Lie Group S-Expansions and Infinite-dimensional Lie algebras}
\author{Hern\'an Astudillo$^{1}$}
\email{hastudil@udec.cl}
\author{Ricardo Caroca$^{1,2}$}
\email{rcaroca@ucsc.cl}
\author{Alfredo P\'erez$^{1,3}$}
\email{perez@aei.mpg.de}
\author{Patricio Salgado$^{1}$}
\email{pasalgad@udec.cl}

\bigskip

\date{\today }

\begin{abstract}
The expansion method of Lie algebras by a semigroup or S-expansion is
generalized to act directly on the group manifold, and not only at the level
of its Lie algebra. The consistency of this generalization with the dual
formulation of the S-expansion is also verified. This is used to show that the
Lie algebras of smooth mappings of some manifold $M$ \ onto a
finite-dimensional Lie algebra, such as the so called loop algebras, can be
interpreted as a particular kind of $S$-expanded Lie algebras. We consider as
an example the construction of a Yang-Mills theory for an infinite-dimensional
algebra, namely loop algebra.

\end{abstract}

\affiliation{$^{1}$Departamento de F\'{\i}sica, Universidad de Concepci\'{o}n, Casilla 160-C, Concepci\'{o}n, Chile.\\
$^{2}$Departamento de Matem\'{a}tica y F\'{\i}sica Aplicadas, Universidad\\Cat\'{o}lica de la Sant\'{\i}sima Concepci\'{o}n, Alonso de Rivera 2850, Concepci\'{o}n, Chile. \\
$^{3}$ Max-Planck-Institut f\"{u}r Gravitationsphysik, Albert-Einstein-Institut, Am M\"{u}hlenberg 1, D-14476 Golm, Germany.
}
\maketitle

\section{\textbf{Introduction}}

Contractions of finite dimensional Lie algebras were introduced several
decades ago by In\"{o}n\"{u} and Wigner and successfully applied to recover
the Galilei group from the Lorentz group. Subsequently, group contractions
were used to retrieve the Poincar\'{e} group from the de-Sitter group in
various dimensions.

In Ref. \cite{majundar} were discussed the In\"{o}n\"{u}-Wigner contractions
of affine Kac-Moody algebras and in Ref. \cite{fialowski} were studied
contractions of some infinite-dimensional Lie algebras such as Kac-Moody,
Witt, Virasoro and Krichever-Novikov algebras

Expansions of finite dimensional Lie algebras were introduced in Refs.
\cite{hatsuda}, \cite{azcarr}, \cite{azc2}, \cite{irs}, and successfully applied to recover
the $M$-algebra as well as the so called D'Auria-Fr\'{e}-algebra from the
$OSp(32/1)$ Lie algebra. Subsequently, Lie algebras expansions were used to
construct invariant tensor, Chern-Simons and Transgression forms for the
expanded algebras \cite{irs07}, \cite{irs08}, \cite{iprs09-2}.

Recently in Refs. \cite{cms}, \cite{cmps}, the expansion methods were
generalized to obtain expanded higher-order Lie algebras and in Ref. \cite{iprs09-1} was constructed the dual formulation of the Lie algebra
S-expansion procedure whose application permits, for example, to obtain the
$(2+1)$-dimensional Chern-Simons AdS gravity from the so-called
\textquotedblleft exotic gravity\textquotedblright.

It is the purpose of this paper to show that the Lie algebras of smooth
mappings of some manifold $M$ onto a finite-dimensional Lie algebra, such as
the so called loop algebras, can be interpreted as a particular kind of
S-expanded Lie algebras, where the notion of S-expansion is generalized to act
directly on the group manifold and not only at the level of the Lie algebra.
We consider as an example the construction of a Yang-Mills theory for the loop algebra.

\section{\textbf{Lie group Expansions}}

\subsection{\textbf{The S-expansion procedure}}

In this section we shall review some aspects of the S-expansion procedure
introduced in ref. \cite{irs}, \cite{cms}. The S-expansion method is based
on combining the structure constants of the Lie algebra\ $\left(  g,\left[
,\right]  \right)  $ with the inner law of a semigroup $S$ to define the Lie
bracket of a new, S-expanded algebra. Let $S=\left\{  \lambda_{\alpha
}\right\}  $ be a finite Abelian semigroup endowed with a commutative and
associative composition law $S\times S\rightarrow S,$ $\left(  \lambda
_{\alpha},\lambda_{\beta}\right)  \mapsto\lambda_{\alpha}\lambda_{\beta
}=K_{\alpha\beta}^{\text{ \ \ \ \ }\gamma}\lambda_{\gamma}.$ Let the
pair\textbf{\ }$\left(  g,\left[  ,\right]  \right)  $ be a Lie
algebra where $g$\ is a finite dimensional vector space, with basis $\left\{
\mathbf{T}_{A}\right\}  _{A=1}^{\dim g}$, over the field $K$; and $\left[
,\right]  $\ is a ruler of composition $g\times g\longrightarrow g,$ $\left(
\mathbf{T}_{A},\mathbf{T}_{B}\right)  \longrightarrow\left[  \mathbf{T}%
_{A},\mathbf{T}_{B}\right]  =C_{AB}^{\text{ \ \ }C}\mathbf{T}_{C}.$ The direct
product $\mathcal{G}=S\otimes g$ is defined as the Cartesian product
set\textbf{\ }
\begin{equation}
\mathcal{G}=S\otimes g=\left\{  \mathbf{T}_{\left(  A,\alpha\right)  }%
=\lambda_{\alpha}\mathbf{T}_{A}\text{ : }\lambda_{\alpha}\in S\text{ ,
}\mathbf{T}_{A}\in g\right\}  ,\label{s1}%
\end{equation}
endowed with a composition law $\left[  ,\right]  _{S}$ $:\mathcal{G}%
\mathfrak{\times}\mathcal{G}\mathfrak{\rightarrow}\mathcal{G}$ defined by%
\begin{equation}
\left[  \mathbf{T}_{\left(  A,\alpha\right)  },\mathbf{T}_{\left(
B,\beta\right)  }\right]  _{S}=:\lambda_{\alpha}\lambda_{\beta}\left[
\mathbf{T}_{A},\mathbf{T}_{B}\right]  =K_{\alpha\beta}^{\text{ \ \ \ }\gamma
}C_{AB}^{\text{ \ \ }C}\lambda_{\gamma}\mathbf{T}_{C}=C_{\left(
A,\alpha\right)  \left(  B,\beta\right)  }^{\text{ \ \ \ \ \ \ \ \ \ \ }%
\left(  C,\gamma\right)  }\mathbf{T}_{\left(  C,\gamma\right)  },\label{s2'}%
\end{equation}
where $\mathbf{T}_{\left(  A,\alpha\right)  }=\lambda_{\alpha}\mathbf{T}_{A} $
is a basis of $\mathcal{G}\mathfrak{.}$ The set (\ref{s1}) with the
composition law (\ref{s2'}) is called a S-expanded Lie algebra. This algebra
is a Lie algebra structure defined over the vector space obtained by taking
$S$\ copies of $g$
\begin{equation}
\mathcal{G}\mathfrak{:}\oplus_{\alpha\in S}W_{\alpha}\text{ }\left(
W_{\alpha}\approx g\text{, }\mathbf{\forall\alpha}\right)  ,
\end{equation}
$\mathbf{\dim}\mathcal{G}\mathfrak{=}ordS\mathbf{\times\dim}g$\textbf{\ }by
means of the structure constants\textbf{\ }%
\begin{equation}
C_{\left(  A,\alpha\right)  \left(  B,\beta\right)  }^{\text{
\ \ \ \ \ \ \ \ \ \ \ }\left(  C,\gamma\right)  }=C_{AB}^{\text{ \ \ }C}%
\delta_{\alpha\beta}^{\gamma},\label{s2''}%
\end{equation}
where $\delta$ is the Kronecker symbol and the subindex $\alpha,\beta\in S $
denotes the inner composition in $S$ so that $\delta_{\alpha\beta}^{\gamma}=1$
when $\alpha\beta=\gamma$ in $S$ and zero otherwise. The constants $C_{\left(
A,\alpha\right)  \left(  B,\beta\right)  }^{\text{ \ \ \ \ \ \ \ \ \ \ \ }%
\left(  C,\gamma\right)  }$ defined by (\ref{s2''}) inherit the symmetry
properties of $C_{AB}^{\text{ \ \ }C}$ of $g$ by virtue
of the abelian character of the S-product, and satisfy the Jacobi identity.

In a nutshell, the S-expansion method can be seen as the natural
generalization of the In\"{o}n\"{u}-Wigner contraction, where instead of
multiplying the generators by a numerical parameter, we multiply the
generators by the elements of an Abelian semigroup.

\begin{theorem}
The product $\left[  ,\right]  _{S}$ defined in (\ref{s2'}) is also a Lie
product because it is linear, antisymmetric and satisfies the Jacobi identity.
This product defines a new Lie algebra characterized by the pair $\left(
\mathcal{G}\mathfrak{,}\left[  ,\right]  _{S}\right)  $, and is called a
S-expanded Lie algebra.\ 
\end{theorem}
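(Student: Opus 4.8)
The plan is to verify the three defining axioms of a Lie bracket — bilinearity, antisymmetry, and the Jacobi identity — reducing each one to the corresponding property of the original bracket $[,]$ on $g$ combined with the algebraic structure of the semigroup $S$. Since $[,]_S$ is prescribed on the basis $\{\mathbf{T}_{(A,\alpha)}\}$ through the structure constants (\ref{s2''}) and extended bilinearly to all of $\mathcal{G}$, bilinearity holds by construction and well-definedness is automatic; it therefore suffices to check antisymmetry and the Jacobi identity on basis elements.

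For antisymmetry I would compare $[\mathbf{T}_{(A,\alpha)}, \mathbf{T}_{(B,\beta)}]_S = \lambda_\alpha\lambda_\beta[\mathbf{T}_A, \mathbf{T}_B]$ with $[\mathbf{T}_{(B,\beta)}, \mathbf{T}_{(A,\alpha)}]_S = \lambda_\beta\lambda_\alpha[\mathbf{T}_B, \mathbf{T}_A]$. The abelian character of $S$ gives $\lambda_\alpha\lambda_\beta = \lambda_\beta\lambda_\alpha$ (equivalently $K_{\alpha\beta}^{\ \gamma} = K_{\beta\alpha}^{\ \gamma}$), while antisymmetry of $[,]$ on $g$ gives $[\mathbf{T}_B, \mathbf{T}_A] = -[\mathbf{T}_A, \mathbf{T}_B]$; the two expressions therefore differ only by a sign. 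Thus antisymmetry of $[,]_S$ follows from commutativity of $S$ together with antisymmetry of $[,]$.

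The central step is the Jacobi identity. Here I would first establish the factorization $[[\mathbf{T}_{(A,\alpha)}, \mathbf{T}_{(B,\beta)}]_S, \mathbf{T}_{(C,\gamma)}]_S = (\lambda_\alpha\lambda_\beta)\lambda_\gamma\,[[\mathbf{T}_A, \mathbf{T}_B], \mathbf{T}_C]$, obtained by inserting the definition (\ref{s2'}) twice and using bilinearity to pull the scalar semigroup and structure constants through the inner and outer brackets. Writing the two remaining cyclic permutations in the same form, I would invoke associativity of $S$ to replace each triple product $(\lambda_\alpha\lambda_\beta)\lambda_\gamma$ by the single well-defined element $\lambda_\alpha\lambda_\beta\lambda_\gamma$, and commutativity of $S$ to recognize that all three cyclic terms then carry the identical prefactor. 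Factoring this common prefactor out leaves precisely the cyclic sum $[[\mathbf{T}_A, \mathbf{T}_B], \mathbf{T}_C] + [[\mathbf{T}_B, \mathbf{T}_C], \mathbf{T}_A] + [[\mathbf{T}_C, \mathbf{T}_A], \mathbf{T}_B]$, which vanishes by the Jacobi identity in $g$. Having established linearity, antisymmetry, and the Jacobi identity, I would conclude that $(\mathcal{G}, [,]_S)$ is a Lie algebra.

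I expect the main obstacle to be the bookkeeping in this last step rather than any deep difficulty: one must confirm that the nested bracket genuinely factorizes into a semigroup triple product times a nested bracket in $g$, which rests on associativity of $S$ to regroup $(\lambda_\alpha\lambda_\beta)\lambda_\gamma$, and that the three cyclic terms share a common prefactor, which rests on commutativity. Phrased through the structure constants (\ref{s2''}), the Jacobi relation for $C_{(A,\alpha)(B,\beta)}^{\ (C,\gamma)}$ factors cleanly into the associativity relation for $K_{\alpha\beta}^{\ \gamma}$ and the Jacobi relation for $C_{AB}^{\ C}$, so once the factorization is in place no computation beyond these two known facts remains.
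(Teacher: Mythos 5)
Your proposal is correct. The paper itself does not present an argument for this theorem --- its ``proof'' is a one-line deferral to Refs.~\cite{irs}, \cite{cms} with the remark that the proof is direct --- and your verification is precisely that direct proof: antisymmetry follows from the abelian character of $S$ (symmetry of $K_{\alpha\beta}^{\ \ \gamma}$) combined with antisymmetry of $C_{AB}^{\ \ C}$, and the Jacobi identity factorizes, via associativity and commutativity of the semigroup product, into a common selector $K_{\alpha\beta\gamma}^{\ \ \ \sigma}$ for the triple product $\lambda_\alpha\lambda_\beta\lambda_\gamma$ multiplying the cyclic sum of nested brackets in $g$, which vanishes. Your treatment of bilinearity is also sound, since $\{\mathbf{T}_{(A,\alpha)}\}$ is by construction a basis of $\oplus_{\alpha\in S}W_\alpha$, so prescribing the bracket on basis pairs and extending bilinearly is automatically well defined; in short, your argument supplies exactly the computation the paper omits.
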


\begin{proof}
The proof is direct and may be found in \cite{irs}, \cite{cms}.
\end{proof}

\subsection{\textbf{S-expansion of Lie groups}}

We can reinterpret the S-expansion procedure as a method to obtain a new Lie
group $\widetilde{G}$ from a given original Lie group $G$, i.e. not only as a
relation between two different Lie algebras but as a method that relates two
different Lie groups\footnote{A step in this direction was given in Ref.
\cite{iprs09-1}.}.

\begin{definition}
Let $\gamma$ be an element of the original Lie group $G$ parametrized as
$\gamma=\exp\left[  \theta^{A}\mathbf{T}_{A}\right]  $, where $\left\{
\theta^{A}\right\}  $, $A=1,..,\dim G$ are the group coordinates and $\left\{
\mathbf{T}_{A}\right\}  $ are the generators of associated Lie algebra $g$
with commutation relations $\left[  \mathbf{T}_{A},\mathbf{T}_{B}\right]
=C_{AB}^{\text{ \ \ }C}\mathbf{T}_{C}$, and let $S=\left\{  \lambda_{\alpha
}\right\}  _{\alpha=1}^{N+1}$ be an abelian semigroup.
\end{definition}

\begin{definition}
The \textbf{S-mapping}\textit{\ }is the mapping from the semigroup $S$ to the
Lie group $G$ ,\ \ $\lambda_{\alpha}\longmapsto\gamma(\lambda_{\alpha})\in G$
defined by%
\begin{equation}
\theta^{A}(\lambda)=\sum_{\alpha=1}^{N+1}\theta^{\left(  A,\alpha\right)
}\lambda_{\alpha},\label{smap}%
\end{equation}
and the parameters $\left\{  \theta^{\left(  A,\alpha\right)  }\right\}  $ are
called the \textbf{S-mapping parameters.}
\end{definition}

The S-mapping is completely characterized by the parameters $\left\{
\theta^{\left(  A,\alpha\right)  }\right\}  $ which, as we will show in the
following theorem, are the group coordinates of the expanded Lie group
$\tilde{G}$.

\begin{theorem}
The S-mapping parameters $\left\{  \theta^{\left(  A,\alpha\right)  }\right\}
$ define the coordinates of a new Lie group $\tilde{G}$, called the S-expanded
Lie group, whose associated Lie algebra is the S-expanded Lie algebra
$\mathcal{G}=S\otimes g$. The elements of the S-expanded Lie group are given
by $\gamma\left(  \lambda\right)  =\exp\left[  \theta^{A}\left(
\gamma\right)  \mathbf{T}_{A}\right]  .$
\end{theorem}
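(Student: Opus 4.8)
The plan is to exhibit the explicit group law on the set parametrized by $\left\{\theta^{(A,\alpha)}\right\}$, to check that it is smooth, associative, with identity and inverses, and then to show that its tangent structure at the identity reproduces $\mathcal{G}=S\otimes g$. The starting observation is that, upon inserting the S-mapping (\ref{smap}) into the parametrization of the group element, the exponent collapses onto the S-expanded basis,
\[
\gamma(\lambda)=\exp\left[\theta^A(\lambda)\mathbf{T}_A\right]=\exp\left[\sum_{A,\alpha}\theta^{(A,\alpha)}\lambda_\alpha\mathbf{T}_A\right]=\exp\left[\theta^{(A,\alpha)}\mathbf{T}_{(A,\alpha)}\right],
\]
so that $\gamma(\lambda)$ is manifestly the exponential of a generic element $X=\theta^{(A,\alpha)}\mathbf{T}_{(A,\alpha)}\in\mathcal{G}$. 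In this way the $\left\{\theta^{(A,\alpha)}\right\}$ appear as coordinates (of the first kind) on the group generated by $\exp\left[\mathcal{G}\right]$, whose existence as a Lie group is in any case guaranteed by Lie's third theorem applied to the finite-dimensional algebra $\mathcal{G}$.

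First I would establish closure of the composition. Taking two such elements and using the Baker--Campbell--Hausdorff series, $\gamma(\lambda)\gamma'(\lambda)=\exp\left[X\right]\exp\left[X'\right]=\exp\left[Z\right]$, where $Z$ is built from iterated brackets of $X,X'\in\mathcal{G}$. The essential input is the computation
\[
\left[\lambda_\alpha\mathbf{T}_A,\lambda_\beta\mathbf{T}_B\right]=\lambda_\alpha\lambda_\beta\left[\mathbf{T}_A,\mathbf{T}_B\right]=K_{\alpha\beta}^{\;\;\;\gamma}C_{AB}^{\;\;C}\lambda_\gamma\mathbf{T}_C=C_{(A,\alpha)(B,\beta)}^{\;(C,\gamma)}\mathbf{T}_{(C,\gamma)},
\]
which shows that every bracket in the BCH expansion remains in the span of $\left\{\mathbf{T}_{(C,\gamma)}\right\}$. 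Hence $Z\in\mathcal{G}$ and the product is again of the form $\exp\left[\theta''^{(A,\alpha)}\mathbf{T}_{(A,\alpha)}\right]$. This is precisely where I expect the main subtlety to reside: because $S$ is only a semigroup and not a group, one must verify that the repeated use of the inner law $\lambda_\alpha\lambda_\beta=K_{\alpha\beta}^{\;\;\;\gamma}\lambda_\gamma$ inside the nested commutators is consistent and never escapes $\mathcal{G}$. This is guaranteed by the preceding theorem (closure and the Jacobi identity of the S-expanded bracket), so that the associativity of $S$ and of $G$ combine to yield a well-defined associative product.

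Next I would collect the remaining group data: the identity is $\gamma(\lambda)=e$ (all $\theta^{(A,\alpha)}=0$), and the inverse is $\gamma(\lambda)^{-1}=\exp\left[-\theta^{(A,\alpha)}\mathbf{T}_{(A,\alpha)}\right]$, which is again of the same form by the closure just shown, since $-X\in\mathcal{G}$ and $\left[X,-X\right]=0$. Smoothness of multiplication and inversion in the coordinates $\left\{\theta^{(A,\alpha)}\right\}$ follows from the smoothness of the corresponding operations on $G$ together with the polynomial, hence smooth, character of the BCH coefficients in these coordinates. This establishes that $\tilde{G}$ is a Lie group having the $\left\{\theta^{(A,\alpha)}\right\}$ as coordinates.

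Finally, to verify that the associated Lie algebra is $\mathcal{G}$, I would differentiate $\gamma(\lambda)$ at the identity,
\[
\left.\frac{\partial\gamma(\lambda)}{\partial\theta^{(A,\alpha)}}\right|_{\theta=0}=\lambda_\alpha\mathbf{T}_A=\mathbf{T}_{(A,\alpha)},
\]
so that $\left\{\mathbf{T}_{(A,\alpha)}\right\}$ is a basis of the tangent space at the identity. Reading off the Lie bracket from the group commutator then reproduces exactly the structure constants $C_{(A,\alpha)(B,\beta)}^{\;(C,\gamma)}=C_{AB}^{\;C}\delta_{\alpha\beta}^{\gamma}$ displayed above; by the preceding theorem this bracket is antisymmetric and satisfies the Jacobi identity, whence the tangent algebra at the identity of $\tilde{G}$ is precisely the S-expanded Lie algebra $\mathcal{G}=S\otimes g$, completing the argument.
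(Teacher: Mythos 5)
Your proof is correct and follows essentially the same route as the paper: insert the S-mapping into the exponential parametrization, identify $\mathbf{T}_{(A,\alpha)}=\lambda_{\alpha}\mathbf{T}_{A}$, and show that the brackets of these generators close on the structure constants of $\mathcal{G}=S\otimes g$, which is exactly the paper's (much shorter) argument. The extra material you include --- Lie's third theorem for existence, BCH closure, identity/inverse/smoothness checks, and the tangent-space identification --- is standard Lie-theoretic scaffolding that the paper leaves implicit rather than a genuinely different approach.
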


\begin{proof}
Replacing the expression for the S-mapping (\ref{smap}) in the parametrization
of a group element we obtain%
\begin{align}
\gamma\left(  \lambda\right)   &  =\exp\left[  \theta^{A}\left(
\gamma\right)  \mathbf{T}_{A}\right]  ,\nonumber\\
&  =\exp\left[  \sum_{\alpha}\theta^{\left(  A,\alpha\right)  }\lambda
_{\alpha}\mathbf{T}_{A}\right]  ,
\end{align}
by defining $\mathbf{T}_{\left(  A,\alpha\right)  }^{\text{ }}\equiv
\lambda_{\alpha}\mathbf{T}_{A}$ we have%
\begin{equation}
\gamma\left(  \lambda\right)  =\exp\left[  \sum_{\alpha}\theta^{\left(
A,\alpha\right)  }\mathbf{T}_{\left(  A,\alpha\right)  }^{\text{ }}\right]
.\label{smap2}%
\end{equation}

The commutation relations of the $\left\{  \mathbf{T}_{\left(  A,\alpha
\right)  }^{\text{ }}\right\}  $ are given by%
\begin{equation}
\left[  \mathbf{T}_{\left(  A,\alpha\right)  }^{\text{ }},\mathbf{T}_{\left(
B,\beta\right)  }^{\text{ }}\right]  =\left[  \lambda_{\alpha}\mathbf{T}%
_{A},\lambda_{\beta}\mathbf{T}_{B}\right]  =\lambda_{\alpha}\lambda_{\beta
}\left[  \mathbf{T}_{A},\mathbf{T}_{B}\right]  =\lambda_{\alpha}\lambda
_{\beta}\left(  C_{AB}^{\text{ \ \ }C}\mathbf{T}_{C}\right)  ,
\end{equation}
since
\begin{equation}
\lambda_{\alpha}\lambda_{\beta}=K_{\alpha\beta}^{\text{ \ \ }\gamma}%
\lambda_{\gamma},
\end{equation}
we have%
\begin{equation}
\left[  \mathbf{T}_{\left(  A,\alpha\right)  }^{\text{ }},\mathbf{T}_{\left(
B,\beta\right)  }^{\text{ }}\right]  =K_{\alpha\beta}^{\text{ \ \ }\gamma
}\lambda_{\gamma}\left(  C_{AB}^{\text{ \ \ }C}\mathbf{T}_{C}\right)
=K_{\alpha\beta}^{\text{ \ \ }\gamma}\left(  C_{AB}^{\text{ \ \ }C}%
\lambda_{\gamma}\mathbf{T}_{C}\right)  ,
\end{equation}%
\begin{equation}
\left[  \mathbf{T}_{\left(  A,\alpha\right)  }^{\text{ }},\mathbf{T}_{\left(
B,\beta\right)  }^{\text{ }}\right]  =K_{\alpha\beta}^{\text{ \ \ }\gamma
}C_{AB}^{\text{ \ \ }C}\mathbf{T}_{\left(  C,\gamma\right)  }^{\text{ }}.
\end{equation}
The generators $\mathbf{T}_{\left(  A,\alpha\right)  }^{\text{ }}$ obey the
commutation relations of the S-expanded algebra $\mathcal{G}=S\otimes g$, and
according to the eq.(\ref{smap2}) the S-mapping parameters $\left\{
\theta^{\left(  A,\alpha\right)  }\right\}  $ are the group coordinates of the
S-expanded group $\tilde{G}$ with Lie algebra $\mathcal{G}=S\otimes g$. This
completes the proof.
\end{proof}

This theorem establishes the relation between the original Lie group $G$ with
coordinates $\left\{  \theta^{A}\right\}  $ and the S-expanded Lie group
$\tilde{G}$ with coordinates $\left\{  \theta^{\left(  A,\alpha\right)
}\right\}  $ through the eq.(\ref{smap}), where the corresponding Lie algebras
are related through an S-expansion as described in ref. \cite{irs}. In this
context the expansion procedure occurs at the level of the group coordinates.

\subsection{\textbf{Consistency with dual formulation of the S-expansion
procedure}}

In the previous section we have extended the notion of a S-expansion to the
group manifold, where the expansion is done at the level of the group
coordinates. As the Maurer-Cartan forms of the Lie group can be realized as
left-invariant one-forms on the group manifold, the S-expansion at the group
coordinates level should lead automatically to an S-expansion formulation in
terms of the Maurer-Cartan forms. In the next theorem we will prove that this
expansion over the left-invariant forms is consistent with the dual
formulation of the S-expansion procedure of ref. \cite{iprs09-1}.

\begin{theorem}
Let $\left\{  \omega^{A}\left(  \theta^{B}\right)  \right\}  $ be the
Maurer-Cartan forms of the Lie group $G$ and let $\left\{  \omega^{\left(
A,\alpha\right)  }\left(  \theta^{\left(  B,\beta\right)  }\right)  \right\}
$ be the Maurer-Cartan forms of the expanded Lie group $\tilde{G}$ realized as
left invariant one-forms over the corresponding group manifold. The
application of the S-expansion procedure on the group coordinates through an
S-mapping (\ref{smap}) gives the following relation between the Maurer-Cartan
forms of $G$ and $\tilde{G}:$%
\begin{equation}
\omega^{A}=\sum_{\alpha=1}^{N+1}\lambda_{\alpha}\omega^{\left(  A,\alpha
\right)  },
\end{equation}
expression consistent with the dual formulation of the S-expansion procedure
of ref. \cite{iprs09-1}.
\end{theorem}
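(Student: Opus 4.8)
The plan is to read off the relation from the single Lie-algebra-valued Maurer-Cartan form $\Omega = \gamma^{-1} d\gamma$ evaluated on $\tilde{G}$, and to confirm the identification through the Maurer-Cartan equation. Recall that on the original group $G$ one writes $\Omega = \omega^A \mathbf{T}_A$, with components $\omega^A = \omega^A(\theta^B)$ satisfying $d\omega^C = -\tfrac{1}{2} C_{AB}{}^{C}\, \omega^A \wedge \omega^B$, whereas on $\tilde{G}$, whose Lie algebra is $\mathcal{G} = S \otimes g$ by the previous theorem, the same construction gives $\Omega = \omega^{(A,\alpha)} \mathbf{T}_{(A,\alpha)}$, with components obeying the expanded Maurer-Cartan equation built from the structure constants $C_{(A,\alpha)(B,\beta)}{}^{(C,\gamma)} = K_{\alpha\beta}{}^{\gamma} C_{AB}{}^{C}$ of (\ref{s2'}).

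First I would insert the parametrization (\ref{smap2}), $\gamma(\lambda) = \exp[\sum_\alpha \theta^{(A,\alpha)} \mathbf{T}_{(A,\alpha)}]$, into $\Omega = \gamma^{-1} d\gamma$, where now $d$ is the exterior derivative in the coordinates $\{\theta^{(A,\alpha)}\}$ of $\tilde{G}$. This produces $\Omega = \omega^{(A,\alpha)} \mathbf{T}_{(A,\alpha)}$. Using the defining identification $\mathbf{T}_{(A,\alpha)} = \lambda_\alpha \mathbf{T}_A$ I would collect the expanded generators back onto the original ones,
\begin{equation}
\Omega = \omega^{(A,\alpha)} \lambda_\alpha \mathbf{T}_A = \Big( \sum_{\alpha=1}^{N+1} \lambda_\alpha\, \omega^{(A,\alpha)} \Big) \mathbf{T}_A .
\end{equation}
Since the same $\Omega$ is computed from $\gamma = \exp[\theta^A(\lambda)\mathbf{T}_A]$, which is nothing but the original group element with its coordinates replaced by their S-mapping images, its coefficient along $\mathbf{T}_A$ is the pullback of $\omega^A$ along the S-mapping. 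Equating the two presentations of the coefficient of $\mathbf{T}_A$ yields the desired $\omega^A = \sum_\alpha \lambda_\alpha\, \omega^{(A,\alpha)}$.

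The substantive step is to confirm that the combination on the right is genuinely the Maurer-Cartan form of $G$, and here I would bypass the nested-commutator series for $\gamma^{-1}d\gamma$ by checking the Maurer-Cartan equation directly. Treating the $\lambda_\alpha$ as constant coefficients that commute through $d$, I would compute
\begin{equation}
d\omega^C = \sum_{\gamma} \lambda_\gamma\, d\omega^{(C,\gamma)} = -\tfrac{1}{2}\, C_{AB}{}^{C} \Big( \sum_{\gamma} \lambda_\gamma K_{\alpha\beta}{}^{\gamma} \Big) \omega^{(A,\alpha)} \wedge \omega^{(B,\beta)} ,
\end{equation}
whereupon the semigroup closure $\sum_{\gamma} \lambda_\gamma K_{\alpha\beta}{}^{\gamma} = \lambda_\alpha \lambda_\beta$ restores exactly $d\omega^C = -\tfrac{1}{2} C_{AB}{}^{C}\, \omega^A \wedge \omega^B$, the Maurer-Cartan equation of $G$. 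This both verifies the identification and makes transparent that the semigroup product is what glues the expanded forms back into the original one.

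Finally I would note the consistency asserted in the statement: the relation $\omega^A = \sum_\alpha \lambda_\alpha \omega^{(A,\alpha)}$ is precisely the form-level expansion rule of the dual formulation of ref. \cite{iprs09-1}, and the splitting of the Maurer-Cartan equation of $G$ into those of $\tilde{G}$ is governed by the same law $\lambda_\alpha \lambda_\beta = K_{\alpha\beta}{}^{\gamma} \lambda_\gamma$. I expect the main obstacle to be conceptual rather than computational: justifying that one may simultaneously treat each $\lambda_\alpha$ as a constant scalar passing through $d$ and through the wedge, while letting products $\lambda_\alpha \lambda_\beta$ be reduced by the semigroup law inside the brackets. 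Routing the verification through the Maurer-Cartan equation isolates this subtlety to the single closure identity above, which is the cleanest way to see that the two formulations agree.
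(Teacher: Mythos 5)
Your proof is correct, but it is organized quite differently from the paper's. The paper proves the relation by brute force on the components: it expands $\omega^{A}(\theta)$ in the explicit nested-commutator series coming from the identity $e^{-\mathbf{X}}de^{\mathbf{X}}=d\mathbf{X}+\frac{1}{2}[d\mathbf{X},\mathbf{X}]+\dots$, substitutes the S-mapping (\ref{smap}) term by term, rearranges the semigroup factors with $\lambda_{\alpha}\lambda_{\beta}=K_{\alpha\beta}^{\ \ \gamma}\lambda_{\gamma}$, and then recognizes the bracketed series in (\ref{s16}) as precisely the Maurer-Cartan series of the expanded algebra, which yields (\ref{s17}). Your argument bypasses that computation entirely: you read the single algebra-valued form $\Omega=\gamma^{-1}d\gamma$ in the two bases related by $\mathbf{T}_{(A,\alpha)}=\lambda_{\alpha}\mathbf{T}_{A}$ and equate coefficients of $\mathbf{T}_{A}$, then verify consistency through the Maurer-Cartan structure equations using the closure identity $\sum_{\gamma}\lambda_{\gamma}K_{\alpha\beta}^{\ \ \gamma}=\lambda_{\alpha}\lambda_{\beta}$. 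What your route buys is brevity and a genuine addition: the structure-equation check is exactly what connects the result to the dual formulation of ref. \cite{iprs09-1} (which is formulated at the level of Maurer-Cartan equations), a consistency the paper asserts but never verifies. What the paper's route buys is that it explicitly establishes the one assertion you make without proof, namely that the coefficient of $\mathbf{T}_{A}$ in the expansion of $\gamma^{-1}d\gamma$ is the pullback of $\omega^{A}$ under the S-mapping; unpacked, that assertion is precisely the series manipulation (\ref{s12})--(\ref{s16}). One point of emphasis to adjust: you call the Maurer-Cartan-equation verification ``the substantive step,'' but by itself it cannot establish the theorem, since forms satisfying the structure equation of $G$ are only guaranteed to be the pullback of $\omega^{A}$ by \emph{some} map into $G$, not by the S-mapping specifically; it is your coefficient-equating step that carries the identification, with the differential check serving as corroboration. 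Since both steps appear in your argument, it is complete as written.
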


\begin{proof}
The canonical one form left-invariant on $G$ is given by%
\begin{equation}
\gamma^{-1}d\gamma=e^{-\theta^{A}\mathbf{T}_{A}}de^{\theta^{A}\mathbf{T}_{A}%
}\equiv\omega^{A}\mathbf{T}_{A}.
\end{equation}
Using the identity
\begin{equation}
e^{-\mathbf{X}}de^{\mathbf{X}}=d\mathbf{X}+\frac{1}{2}\left[  d\mathbf{X}%
,\mathbf{X}\right]  +\frac{1}{3!}\left[  \left[  d\mathbf{X},\mathbf{X}%
\right]  ,\mathbf{X}\right]  +\frac{1}{4!}\left[  \left[  \left[
d\mathbf{X},\mathbf{X}\right]  ,\mathbf{X}\right]  ,\mathbf{X}\right]  +...,
\end{equation}
we can obtain the left invariant Maurer-Cartan forms on $G$ in terms of the
group coordinates%
\[
\omega^{A}\left(  \theta\right)  =d\theta^{A}+\frac{1}{2}C_{BE_{1}}^{\text{
\ \ \ \ }A}\theta^{E_{1}}d\theta^{B}+\frac{1}{3!}C_{BE_{1}}^{\text{
\ \ \ }D_{1}}C_{D_{1}E_{2}}^{A}\theta^{E_{1}}\theta^{E_{2}}d\theta^{B}%
\]%
\begin{equation}
+\frac{1}{4!}C_{BE_{1}}^{\text{ \ \ \ }D_{1}}C_{D_{1}E_{2}}^{\text{
\ \ \ }D_{2}}C_{D_{2}E_{3}}^{\text{ \ \ \ \ \ }A}\theta^{E_{2}}\theta^{E_{2}%
}\theta^{E_{3}}d\theta^{B}+...
\end{equation}
Introducing eq.(\ref{smap}) into this expression we have%
\begin{align}
\omega^{A} &  =\left(  \sum_{\alpha}\lambda_{\alpha}d\theta^{\left(
A,\alpha\right)  }\right)  +\frac{1}{2}C_{BE_{1}}^{\text{ \ \ \ }A}\left(
\sum_{\varepsilon_{1}}\lambda_{\varepsilon_{1}}\theta^{\left(  E_{1}%
,\varepsilon_{1}\right)  }\right)  \left(  \sum_{\beta}\lambda_{\beta}%
d\theta^{\left(  B,\beta\right)  }\right)  +\nonumber\\
&  +\frac{1}{3!}C_{BE_{1}}^{\text{ \ \ }D_{1}}C_{D_{1}E_{2}}^{\text{
\ \ \ \ \ }A}\left(  \sum_{\varepsilon_{1}}\lambda_{\varepsilon_{1}}%
\theta^{\left(  E_{1},\varepsilon_{1}\right)  }\right)  \left(  \sum
_{\varepsilon_{2}}\lambda_{\varepsilon_{2}}\theta^{\left(  E_{2}%
,\varepsilon_{2}\right)  }\right)  \left(  \sum_{\beta}\lambda_{\beta}%
d\theta^{\left(  B,\beta\right)  }\right)  +\nonumber\\
&  +\frac{1}{4!}C_{BE_{1}}^{\text{ \ }D_{1}}C_{D_{1}E_{2}}^{\text{
\ \ \ }D_{2}}C_{D_{2}E_{3}}^{\text{ \ \ \ \ \ }A}\left(  \sum_{\varepsilon
_{1}}\lambda_{\varepsilon_{1}}\theta^{\left(  E_{1},\varepsilon_{1}\right)
}\right)  \left(  \sum_{\varepsilon_{2}}\lambda_{\varepsilon_{2}}%
\theta^{\left(  E_{2},\varepsilon_{2}\right)  }\right)  \left(  \sum
_{\varepsilon_{3}}\lambda_{\varepsilon_{3}}\theta^{\left(  E_{3}%
,\varepsilon_{3}\right)  }\right)  \left(  \sum_{\beta}\lambda_{\beta}%
d\theta^{\left(  B,\beta\right)  }\right)  +...,\label{s12}%
\end{align}
rearranging the semigroup elements%
\begin{align}
\omega^{A} &  =\sum_{\alpha}\lambda_{\alpha}d\theta^{\left(  A,\alpha\right)
}+\frac{1}{2}C_{BE_{1}}^{\text{ \ \ \ }A}\sum_{\varepsilon_{1},\beta}%
\lambda_{\varepsilon_{1}}\lambda_{\beta}\theta^{\left(  E_{1},\varepsilon
_{1}\right)  }d\theta^{\left(  B,\beta\right)  }+\frac{1}{3!}C_{BE_{1}%
}^{\text{ \ \ }D_{1}}C_{D_{1}E_{2}}^{\text{ \ \ \ \ }A}\sum_{\varepsilon
_{1},\varepsilon_{2},\beta}\lambda_{\varepsilon_{1}}\lambda_{\varepsilon_{2}%
}\lambda_{\beta}\theta^{\left(  E_{1},\varepsilon_{1}\right)  }\theta^{\left(
E_{2},\varepsilon_{2}\right)  }d\theta^{\left(  B,\beta\right)  }\nonumber\\
&  +\frac{1}{4!}C_{BE_{1}}^{\text{ \ \ }D_{1}}C_{D_{1}E_{2}}^{\text{
\ \ \ \ }D_{2}}C_{D_{2}E_{3}}^{\text{ \ \ \ \ \ \ }A}\sum_{\varepsilon
_{1},\varepsilon_{2},\varepsilon_{3},\varepsilon_{4}}\lambda_{\varepsilon_{1}%
}\lambda_{\varepsilon_{2}}\lambda_{\varepsilon_{3}}\lambda_{\beta}%
\theta^{\left(  E_{1},\varepsilon_{1}\right)  }\theta^{\left(  E_{2}%
,\varepsilon_{2}\right)  }\theta^{\left(  E_{3},\varepsilon_{3}\right)
}d\theta^{\left(  B,\beta\right)  }+....\text{ \ .}\label{s13}%
\end{align}
Using the composition law of the semigroup $\lambda_{\alpha}\lambda_{\beta
}=K_{\alpha\beta}^{\text{ \ \ \ }\gamma}\lambda_{\gamma}$ and taken into
account the fact that the two-selectors behave like a Kronecker delta, we can
introduce a sum into $\alpha$ without changing the result:%
\begin{align}
\omega^{A} &  =\sum_{\alpha}\lambda_{\alpha}dg^{\left(  A,\alpha\right)
}+\sum_{\alpha}\lambda_{\alpha}\frac{1}{2}\sum_{\varepsilon_{1},\beta}\left(
C_{BE_{1}}^{\text{ \ \ \ }A}K_{\beta\varepsilon_{1}}^{\text{ \ \ \ \ \ }%
\alpha}\right)  \theta^{\left(  E_{1},\varepsilon_{1}\right)  }d\theta
^{\left(  B,\beta\right)  }+\nonumber\\
&  +\sum_{\alpha}\lambda_{\alpha}\frac{1}{3!}\sum_{\beta,\varepsilon
_{1},\delta_{1},\varepsilon_{2}}\left(  C_{BE_{1}}^{\text{ \ \ }D_{1}}%
K_{\beta\varepsilon_{1}}^{\text{ \ \ \ \ }\delta}\right)  \left(
C_{D_{1}E_{2}}^{\text{ \ \ \ \ }A}K_{\delta\varepsilon_{2}}^{\text{
\ \ \ \ \ }\alpha}\right)  \theta^{\left(  E_{1},\varepsilon_{1}\right)
}\theta^{\left(  E_{2},\varepsilon_{2}\right)  }d\theta^{\left(
B,\beta\right)  }+\nonumber\\
&  +\sum_{\alpha}\lambda_{\alpha}\frac{1}{4!}\sum_{\varepsilon_{1}%
,\varepsilon_{2},\varepsilon_{3},\varepsilon_{4},\beta,\delta_{1},\delta_{2}%
}\left(  C_{BE_{1}}^{\text{ \ }D_{1}}K_{\beta\varepsilon_{1}}^{\text{
\ \ \ }\delta_{1}}\right)  \left(  C_{D_{1}E_{2}}^{\text{ \ \ \ }D_{2}%
}K_{\delta_{1}\varepsilon_{2}}^{\text{ \ \ \ }\delta_{2}}\right)  \left(
C_{D_{2}E_{3}}^{\text{ \ \ \ \ }A}K_{\delta_{2}\varepsilon_{3}}^{\text{
\ \ \ \ }\alpha}\right)  \theta^{\left(  E_{1},\varepsilon_{1}\right)  }%
\theta^{\left(  E_{2},\varepsilon_{2}\right)  }\theta^{\left(  E_{3}%
,\varepsilon_{3}\right)  }d\theta^{\left(  B,\beta\right)  }+..\label{s14}%
\end{align}
Introducing the structure constants of the expanded Lie algebra $C_{\left(
A,\alpha\right)  \left(  B,\beta\right)  }^{\text{ \ \ \ \ \ \ \ \ \ \ \ \ }%
\left(  C,\gamma\right)  }:=C_{AB}^{\text{ \ \ \ }C}K_{\alpha\beta}^{\text{
\ \ \ }\gamma},$ we have%
\begin{align}
\omega^{A} &  =\sum_{\alpha}\lambda_{\alpha}[\sum_{\beta}\delta_{B}^{A}%
\delta_{\beta}^{\alpha}+\frac{1}{2}\sum_{\varepsilon_{1},\beta}C_{(B,\beta
)\left(  E_{1},\varepsilon_{1}\right)  }^{\text{ \ \ \ \ \ \ \ \ \ \ }\left(
A,\alpha\right)  }\theta^{\left(  E_{1},\varepsilon_{1}\right)  }+\frac{1}%
{3!}\sum_{\beta,\varepsilon_{1},\delta_{1}.\varepsilon_{2}}C_{(B,\beta)\left(
E_{1},\varepsilon_{1}\right)  }^{\text{ \ \ \ \ \ \ \ \ }\left(  D_{1}%
,\delta_{1}\right)  }C_{(D_{1},\delta_{1})\left(  E_{2},\varepsilon
_{2}\right)  }^{\text{ \ \ \ \ \ \ \ \ \ \ \ \ \ \ \ }\left(  A,\alpha\right)
}\theta^{\left(  E_{1},\varepsilon_{1}\right)  }\theta^{\left(  E_{2}%
,\varepsilon_{2}\right)  }\nonumber\\
+ &  \frac{1}{4!}\sum_{\varepsilon_{1},\varepsilon_{2},\varepsilon
_{3},\varepsilon_{4},\beta,\delta_{1},\delta_{2}}C_{(B,\beta)\left(
E_{1},\varepsilon_{1}\right)  }^{\text{ \ \ \ \ \ \ \ \ \ \ }\left(
D_{1},\delta_{1}\right)  }C_{(D_{1},\delta_{1})\left(  E_{2},\varepsilon
_{2}\right)  }^{\text{ \ \ \ \ \ \ \ \ \ \ \ \ \ \ \ }\left(  D_{2},\delta
_{2}\right)  }C_{(D_{2},\delta_{2})\left(  E_{3},\varepsilon_{3}\right)
}^{\text{ \ \ \ \ \ \ \ \ \ \ \ \ \ \ \ }\left(  A,\alpha\right)  }%
\theta^{\left(  E_{1},\varepsilon_{1}\right)  }\theta^{\left(  E_{2}%
,\varepsilon_{2}\right)  }\theta^{\left(  E_{3},\varepsilon_{3}\right)
}+.....]d\theta^{\left(  B,\beta\right)  }.\label{s16}%
\end{align}
Finally we can write
\begin{equation}
\omega^{A}=\sum_{\alpha}\lambda_{\alpha}\omega^{\left(  A,\alpha\right)
},\label{s17}%
\end{equation}
where $\omega^{(A,\alpha)}$ are the Maurer-Cartan forms of the S-expanded
algebra. This completes the proof.
\end{proof}

The eq.(\ref{s17}) gives the relation between the Maurer-Cartan forms of the
original and the expanded Lie algebras. This expression is in agreement with
the dual formulation of the S-expansion method of ref. \cite{iprs09-1}.

\section{\textbf{Infinite-dimensional Lie Algebras}}

In this section we shall review some aspects of the Lie algebras of smooth
mapping of a manifold $M$ onto a finite-dimensional Lie algebra. The main
point of this section is to display the differences between the usual Lie
algebras and the algebras for the $1$-sphere, $2$-sphere and $3$-sphere groups
of a compact simple Lie group.

\subsection{\textbf{The Affine Kac-Moody Algebra }}

Let $g$ be the Lie algebra of a finite-dimensional compact simple Lie group,
$G$. \ Let $Map(S^{1},G)$ be the set of smooth mappings from the circle
$S^{1}$ to $G$:
\begin{equation}
\gamma:S^{1}\longmapsto G;\text{ \ \ }z\longmapsto\gamma(z),\label{k1}%
\end{equation}
where $z\in\mathbb{C},$ $\left\vert z\right\vert =1.$ \ Here $\mathbb{C}$ is
the field of complex numbers. The set of $Map(S^{1},G)$ acquires the structure
of a group under the pointwise multiplication $($in $G):$
\begin{equation}
\left(  \gamma_{1}\circ\gamma_{2}\right)  (z)=\gamma_{1}(z)\gamma
_{2}(z).\label{k2}%
\end{equation}
This is the loop group of $G$. It has the natural structure of an
infinite-dimensional Lie group.

To obtain the corresponding Lie algebra \cite{goddard}\textbf{, }denoted by
\textbf{\ }$Map(S^{1},G)$, we begin by choosing a basis $\mathbf{T}^{\text{
}A},1\leq A\leq\dim(g),$ for $g$ with
\begin{equation}
\left[  \mathbf{T}^{A},\mathbf{T}^{B}\right]  =if^{ABC}\mathbf{T}%
^{C},\label{k3}%
\end{equation}
where the structure constants $f^{abc}$ satisfy the Jacobi identity%
\begin{equation}
f^{ABD}f^{DCE}+f^{CAD}f^{DBE}+f^{BCD}f^{DAE}=0,\label{k4}%
\end{equation}
and antisymmetry%
\begin{equation}
f^{ABC}=-f^{BAC}.\label{k5}%
\end{equation}
Moreover, we choose $f^{ABC}$ to be totally antisymmetric, satisfying%
\begin{equation}
f^{ABC}f^{ABD}=\delta^{CD},\label{k6}%
\end{equation}
which can always be done since $G$ is restricted to be simple. The connected
component of the group $Map(S^{1},G)$ consists of elements
\begin{equation}
\gamma(z)=\exp\left(  -i\mathbf{T}^{A}\theta_{A}(z)\right)  .\label{k7}%
\end{equation}

Near the identity they have the form
\begin{equation}
\gamma(z)\approx1-i\mathbf{T}^{A}\theta_{A}(z),\label{k8}%
\end{equation}
where the functions $\theta_{A}(z)$ are defined on the unit circle. They can
be expanded in a Laurent series%
\begin{equation}
\theta_{A}(z)=\sum_{n=-\infty}^{\infty}\theta_{A}^{n}z^{n}.\label{k9}%
\end{equation}
Thus%
\begin{equation}
\gamma(z)\approx1-i\sum_{n=-\infty}^{\infty}\theta_{A}^{n}\mathbf{T}%
_{n}^{\text{ }A}.\label{k10}%
\end{equation}
Identifying $\theta_{A}^{n}$ as the infinitesimal parameters of $Map(S^{1},G)
$ we see that $\mathbf{T}_{n}^{A}\equiv\mathbf{T}^{\text{ }A}z^{n}$ are the
generators of the algebra $Map(S^{1},g).$ Note that (\ref{k9}) may be viewed
as the expansion of $\theta_{A}(z)$ in a Fourier series; that is, in a
complete set of functions on the unit circle. The basic commutation relations
for the generators $\mathbf{T}_{n}^{A}$ may now be derived%
\begin{equation}
\left[  \mathbf{T}_{m}^{A},\mathbf{T}_{n}^{B}\right]  =if^{ABC}\mathbf{T}%
_{m+n}^{C}.\label{k11}%
\end{equation}
On a basis where the $\mathbf{T}^{A}$ are hermitian operators, a consistent
definition of the hermitian adjoint turns out to be%
\begin{equation}
\mathbf{T}_{n}^{A\dagger}=\mathbf{T}_{-n}^{A}.\label{k12}%
\end{equation}

The loop algebra admits a nontrivial central extension. This central extension
is unidimensional and essentialy unique. This is called the Kac-Moody algebra.
The commutation relations of the Kac-Moody algebra (on the chosen basis) are
\begin{equation}
\left[  \mathbf{T}_{m}^{A},\mathbf{T}_{n}^{B}\right]  =if^{ABC}\mathbf{T}%
_{m+n}^{\text{ }C}+\mathbf{k}\delta^{AB}m\delta_{m,-n},\label{k13}%
\end{equation}
\bigskip where $\mathbf{k}$ is the central generator, which satisfies%
\begin{equation}
\left[  \mathbf{k},\mathbf{T}_{m}^{A}\right]  =0,\text{ \ }\forall
\mathbf{T}_{m}^{A}.\label{k14}%
\end{equation}

\subsection{\textbf{The 2-Sphere and 3-Sphere Algebras}}

The most immediate generalization of the loop group algebra (or Kac-Moody
algebra) is the algebra corresponding to the Lie group $Map(M;G)$ of all
smooth mappings from a compact manifold $M$, dim$M>1,$ onto a compact simple
Lie group $G$ (with the group law of pointwise multiplication in $G$). The set
of smooth mappings from a $n$-dimensional compact manifold $M$ onto $G$ will
form an infinite-dimensional Lie group under pointwise multiplication.

We denote by $Map(S^{n};G)$ the group of smooth mapping $\gamma:S^{n}%
\longmapsto G,$ \ \ $x\longmapsto\gamma(x)$ from the $n$-sphere to $G;$ and by
$Map(S^{n};g)$ the corresponding infinite-dimensional Lie algebra. To see how
the structure $Map(S^{n};g)$ is derived for the cases of $2$-Sphere and
$3$-Sphere algebras, let us consider a general element of the connected
component of the group $Map(S^{n};G)$ with $n=2,3$, which consists of
elements
\begin{equation}
\gamma(x)=\exp\left(  -i\mathbf{T}^{A}\theta_{A}(x)\right)  .
\end{equation}
Near the identity they have the form
\begin{equation}
\gamma(x)\approx1-i\mathbf{T}^{A}\theta_{A}(x),
\end{equation}
where the functions $\theta_{A}(x)$ are a set of $\dim(g)$ functions defined
on $S^{n}$. These functions can be expanded in terms of a complete set or
basis $\left\{  F_{I}\right\}  $ of orthogonal functions on $S^{n}.$ For
functions on the $2$-Sphere $S^{2}$ a basis is provided by the spherical
harmonics $Y_{lm}(z,\varphi)$ ($z=\cos\theta$). For the $3$-Sphere case there
is also a complete set of orthogonal functions provided by the Wigner
$D$-functions $D_{mm^{\prime}}^{j}(\alpha,\beta,\gamma)$, where $j$ is a
non-negative integer or half-odd-integer and $m,m^{\prime}$ separately have
the same spectrum in the range from $-j$ to $j$, changing in steps of one.

Expanding the functions $\theta_{A}(x)$ in the basis $\left\{  F_{I}\right\}
$%
\begin{equation}
\theta_{A}(x)=\sum_{I}\theta_{A}^{I}F_{I}(x),
\end{equation}
we can write
\begin{equation}
\gamma(x)\approx1-i\sum_{I}\theta_{A}^{I}\mathbf{T}^{A}F_{I}(x)=1-i\sum
_{I}\theta_{A}^{I}\mathbf{T}_{I}^{A},
\end{equation}
where, for $n=2$, $I\equiv L$ and $\ F_{L}=Y_{L}(z,\varphi)$, with $L$
denoting the ordered pair $(l,m)$; and for $n=3$, $I\equiv J$ and
$\ F_{J}=D_{J}(\alpha,\beta,\gamma)$, with $J$ standing for the triple
$(j,m,m^{\prime}).$ Identifying $\theta_{A}^{I}$ as the infinitesimal
parameters of $Map(S^{n},G)$ we see that $\mathbf{T}_{I}^{\text{ }A}%
\equiv\mathbf{T}^{A}F_{I}$ are the generators of the algebra $Map(S^{n},g).$
This algebra can be written as the product $g\otimes C^{\infty}(S^{n})$, where
$C^{\infty}(S^{n})$ is the set of smooth functions on the $n$-dimensional
sphere, $S^{n}.$ The commutator is specified by
\begin{equation}
\left[  \mathbf{T}_{I_{1}}^{A},\mathbf{T}_{I_{2}}^{B}\right]  =\left[
\mathbf{T}^{A}F_{I_{1}},\mathbf{T}^{B}F_{I_{2}}\right]  =\left[
\mathbf{T}^{A},\mathbf{T}^{B}\right]  \otimes F_{I_{1}}F_{I_{2}},
\end{equation}
where $F_{I_{1}},F_{I_{2}}\in C^{\infty}(S^{n}).$ Is interesting to note that
when the manifold is $S^{1}$, $Map(S^{n},g)$ becomes the loop algebra.

To derive the expression for the commutation relations of the Lie algebra
$Map(S^{2},g)$ in the basis $\left\{  \mathbf{T}_{L}^{A}\right\}  $, let us
first consider the direct product of two spherical harmonics of the same
arguments. They may be expanded in series as
\begin{equation}
Y_{l_{1}m_{1}}Y_{l_{2}m_{2}}=\sum_{l}c(L_{1},L_{2},L)Y_{lm},
\end{equation}
where%
\begin{equation}
c(L_{1},L_{2},L)=\left\langle l_{1}m_{1},l_{2}m_{2}\mid lm\right\rangle
\left\langle l_{1}0,l_{2}0\mid l0\right\rangle \left[  \frac{(2l_{1}%
+1)(2l_{2}+1)}{4\pi(2l+1)}\right]  ^{1/2}.
\end{equation}
Thus, the commutation relations for the generators $\mathbf{T}_{L}^{A}$ may
now be obtained%
\begin{equation}
\left[  \mathbf{T}_{L_{1}}^{A},\mathbf{T}_{L_{2}}^{B}\right]  =\left[
\mathbf{T}^{A},\mathbf{T}^{B}\right]  \otimes Y_{L_{1}}Y_{L_{2}}=\left(
if^{ABC}\mathbf{T}^{C}\right)  c(L_{1},L_{2},L)Y_{L},
\end{equation}%
\begin{equation}
\left[  \mathbf{T}_{L_{1}}^{A},\mathbf{T}_{L_{2}}^{B}\right]  =if^{ABC}%
c(L_{1},L_{2},L)\mathbf{T}^{C}Y_{L},
\end{equation}%
\begin{equation}
\left[  \mathbf{T}_{L_{1}}^{A},\mathbf{T}_{L_{2}}^{B}\right]  =if^{ABC}%
c(L_{1},L_{2},L)\mathbf{T}_{L}^{C},
\end{equation}
where there is a summation over the dummy index $L$, that is, over
$l\geq\left\vert m\right\vert $ since $m$ has the fixed value $m_{1}+m_{2}.$

For the case of the group $Map(S^{3},G)$ we have that $F_{J}\equiv
D_{mm^{\prime}}^{j}(\alpha,\beta,\gamma)$ providing a complete basis of
functions on $S^{3}$, which may be expanded in series as
\begin{equation}
D_{m_{1}m_{1}^{\prime}}^{j_{1}}D_{m_{2}m_{2}^{\prime}}^{j_{2}}=\sum_{j}%
c(J_{1},J_{2},J)D_{mm^{\prime}}^{j},
\end{equation}
where%
\begin{equation}
c(J_{1},J_{2},J)=\left\langle j_{1}m_{1},j_{2}m_{2}\mid jm\right\rangle
\left\langle j_{1}m_{1}^{\prime},j_{2}m_{2}^{\prime}\mid jm^{\prime
}\right\rangle ,
\end{equation}
and the commutation relations for the generators $\mathbf{T}_{J}^{A}$ are
given by%
\begin{equation}
\left[  \mathbf{T}_{J_{1}}^{A},\mathbf{T}_{J_{2}}^{B}\right]  =if^{ABC}%
c(J_{1},J_{2},J)\mathbf{T}_{J}^{C}.
\end{equation}

This algebra admits a nontrivial central extension, which is of the form
\begin{equation}
\left[  \mathbf{T}_{I_{1}}^{A},\mathbf{T}_{I_{2}}^{B}\right]  =if^{ABC}%
c(I_{1},I_{2},I)\mathbf{T}_{I}^{C}+d_{I_{1}I_{2}(a)}^{ab}\mathbf{K}^{a},
\end{equation}
\bigskip where $\mathbf{K}^{a}$ is the central generators, which satisfies%
\begin{equation}
\left[  \mathbf{K}^{a},\mathbf{K}^{b}\right]  =0;\text{ \ }\left[
\mathbf{K}^{a},\mathbf{T}_{I}^{A}\right]  =0\text{\ ,}%
\end{equation}
where $a$ is an index that labels \ the independent central elements.

\section{\textbf{Infinite-dimensional Lie algebras Map(}$\mathbf{S}^{n},g$)
\textbf{as a S-expanded Lie algebra}}

\subsection{\textbf{Loop Algebra as a }S\textbf{-expanded Lie algebra }}

If, in the S-expansion procedure, the finite semigroup is generalized to the
case of an infinite semigroup, then the S-expanded algebra will be an
\textit{infinite-dimensional algebra}. One can this see from the fact that
$\mathbf{T}_{\left(  A,\alpha\right)  }=\lambda_{A}\mathbf{T}_{\alpha}$
constitutes a base for the S-expanded algebra $\mathcal{G}$ and from the fact
that $\alpha$ takes now the values in an infinite set.

A simple example is the case when the semigroup $S$ becomes the infinite set
of the integer numbers $\mathbb{Z}$\ under the addition. The semigroup
elements can be represented by the following subset of the complex numbers
$S=\left\{  z^{n}=\exp(in\varphi)\right\}  _{n=-\infty}^{\infty}$ for an
arbitrary real parameter $\varphi\in\left[  0,2\pi\right]  $ and $n$ an
integer$.$ The semigroup composition law is given by%
\begin{equation}
z^{n}z^{m}=z^{n+m}=\delta_{l}^{m+n}z^{l}.\label{prod1}%
\end{equation}

The S-expansion procedure over the group manifold described in the section II
has an interesting interpretation. According to the theorem 4, the relation
between the coordinates of the original semigroup $\left\{  \theta
^{A}\right\}  $ and the coordinates of the S-expanded semigroup $\left\{
\theta^{\left(  A,n\right)  }\right\}  $ takes the form%
\begin{equation}
\theta^{A}=\sum_{n=-\infty}^{+\infty}\theta^{\left(  A,n\right)  }z^{n}.
\end{equation}

This expression is equivalent to the Laurent expansion of the group
coordinates defined over $S^{1}$ as in eq.(\ref{k2}) which leads to the loop
algebra. This equivalence is due to the fact that the Fourier modes
$z^{n}=\exp(in\varphi)$ obey the product law (\ref{prod1}) of the semigroup
$\mathbb{Z}$. In a few words, the S-expansion over the group manifold with the
semigroup $\mathbb{Z}$ is equivalent to the "compactification" process over
$S^{1}$ which leads to the definition of the loop algebra.

This means that if the pair\textbf{\ }$\left(  g,\left[  ,\right]  \right)
$\textbf{\ }is a Lie algebra whose finite-dimensional vector space has a basis
given by $\left\{  \mathbf{T}_{A}\right\}  _{A=1}^{\dim g}$, which satisfies
the composition law $g\times g\longrightarrow g,$ $\left(  \mathbf{T}%
_{A},\mathbf{T}_{B}\right)  \longrightarrow\left[  \mathbf{T}_{A}%
,\mathbf{T}_{B}\right]  =if_{ABC}\mathbf{T}_{C}$, then the pair $\left(
\mathcal{G},\left[  ,\right]  \right)  $ \ is a Lie algebra whose vector space
has a basis given by $\left\{  \mathbf{T}_{\left(  A,n\right)  }\right\}  $,
which satisfies a composition law \ $\mathcal{G}\times
\mathcal{G\longrightarrow G},$ defined by%
\begin{equation}
\left[  \mathbf{T}_{\left(  A,m\right)  },\mathbf{T}_{\left(  B,n\right)
}\right]  _{S}=z^{m}z^{n}\left[  \mathbf{T}_{A},\mathbf{T}_{B}\right]
=z^{m}z^{n}\left(  if_{ABC}\mathbf{T}_{C}\right)  ,
\end{equation}
where $\mathbf{T}_{\left(  A,m\right)  }=z^{m}\mathbf{T}_{A}$ is a basis of
$\mathcal{G}\mathfrak{.}$ Since $z^{m}z^{n}=z^{m+n}=\delta_{l}^{m+n}z^{l}$ we
can write%

\begin{equation}
\left[  \mathbf{T}_{\left(  A,m\right)  },\mathbf{T}_{\left(  B,n\right)
}\right]  _{S}=if_{ABC}\delta_{l}^{m+n}z^{l}\mathbf{T}_{C}=if_{ABC}\delta
_{l}^{m+n}\mathbf{T}_{\left(  C,l\right)  },
\end{equation}%
\begin{equation}
\left[  \mathbf{T}_{\left(  A,m\right)  },\mathbf{T}_{\left(  B,n\right)
}\right]  _{S}=if_{ABC}\mathbf{T}_{\left(  C,m+n\right)  }.
\end{equation}

Comparing with eq.(\ref{k11}), we see that the above algebra obtained by an
S-expansion with semigroup $\mathbb{Z}$ can be identified with the loop
algebra. \ 

\subsection{\textbf{The 2-Sphere and 3-Sphere algebras as S-expanded Lie
algebras}}

In the case that the finite semigroup is generalized to a complete set or
basis $\left\{  F_{I}\right\}  $ of orthogonal functions on $S^{n}$, we have
that for the $n=2$ case $S=\left\{  F_{I}=Y_{I}(\theta,\varphi)\right\}  $ and
for the $n=3$ case $S=\left\{  F_{I}=D_{I}(\alpha\beta\gamma)\right\}  ; $ so
that the relation between the coordinates of the original semigroup $\left\{
\theta^{A}\right\}  $ and the coordinates of the S-expanded semigroup
$\left\{  \theta^{\left(  A,n\right)  }\right\}  $ according to the theorem 4
takes the form%
\begin{equation}
\theta^{A}=\sum_{n=-\infty}^{+\infty}\theta^{\left(  A,I\right)  }F_{I}.
\end{equation}
As \ in the loop algebra case, we can interpret this expression as a Fourier
expansion in the base defined over the corresponding compact manifold (in this
case $S^{n}$), where the semigroup elements are identified with the elements
of the corresponding base.

\bigskip The generators of the S-expanded algebra are given by $\mathbf{T}%
_{\left(  A,I\right)  }=\mathbf{T}_{A}F_{I}$ which satisfy a composition law
defined by%

\begin{equation}
\left[  \mathbf{T}_{\left(  A,I_{1}\right)  },\mathbf{T}_{\left(
B,I_{2}\right)  }\right]  =\left[  \mathbf{T}_{A}F_{I_{1}},\mathbf{T}%
_{B}F_{I_{2}}\right]  =\left[  \mathbf{T}_{A},\mathbf{T}_{B}\right]  \otimes
F_{I_{1}}F_{I_{2}}.
\end{equation}
This expression defines the n-sphere Lie algebra.

\section{\textbf{Loop algebra and Yang-Mills Theory}}

In this section we consider the construction of actions invariant under the
loop algebra.

\subsection{\textbf{Invariant tensors for loop algebras}}

In ref. \cite{irs}, was show that the S-expansion procedure permits to
obtaining invariant tensors for expanded Lie algebras from invariant tensors
of the original Lie algebra. This allows constructing gauge theories such as
Yang-Mills or Chern-Simons theories invariant under S-expanded Lie algebras. \ \ 

Following ref.\cite{irs} it is direct to see that if $\left\langle
\mathbf{T}_{A_{1}}\cdots\mathbf{T}_{A_{N}}\right\rangle $ is an invariant
tensor for a Lie algebra $g,$ then an invariant tensor for the loop algebra is
given by: \
\begin{equation}
\left\langle \mathbf{T}_{A_{1}}^{n_{1}}\cdots\mathbf{T}_{A_{N}}^{n_{N}%
}\right\rangle =\overset{+\infty}{\underset{m=-\infty}{\sum}}\alpha^{\left(
m\right)  }\delta_{m}^{n_{1}+n_{2}+\cdots+n_{N}}\left\langle \mathbf{T}%
_{A_{1}}\cdots\mathbf{T}_{A_{N}}\right\rangle ,
\end{equation}
where $\alpha^{\left(  m\right)  }$ are arbitrary constants.

\subsection{\textbf{Yang-Mills theory for the loop algebra of }$SU\left(
N\right)  $}

We consider the Lie algebra $g=SU\left(  N\right)  $, whose Killing metric is
given by
\begin{equation}
\left\langle \mathbf{T}_{A}\mathbf{T}_{B}\right\rangle =\frac{1}{2}\delta
_{AB}.
\end{equation}
This invariant tensor permits constructing a Yang-Mills action given by%

\begin{equation}
S_{YM}=-\frac{1}{4}\int F_{\mu\nu}^{A}F^{\mu\nu B}\left\langle \mathbf{T}%
_{A}\mathbf{T}_{B}\right\rangle .
\end{equation}

If we consider the S-expansion of the Lie algebra $SU\left(  N\right)  $, we
find that the associated Loop algebra is
\begin{equation}
\left[  \mathbf{T}_{A}^{n},\mathbf{T}_{B}^{m}\right]  =if_{ABC}\mathbf{T}%
_{C}^{n+m},\label{eqdos}%
\end{equation}
where $f_{ABC}$ are the structure constants of $SU\left(  N\right)  $. \ The
corresponding invariant tensor for the Loop algebra can be directly obtained:
\begin{equation}
\left\langle \mathbf{T}_{A}^{n}\mathbf{T}_{B}^{m}\right\rangle =\frac{1}%
{2}\overset{+\infty}{\underset{r=-\infty}{\sum}}\alpha^{\left(  r\right)
}\delta_{r}^{n+m}\delta_{AB},\label{eqtres}%
\end{equation}
where $\alpha^{(r)}$ are arbitrary constants. This invariant tensor permits
constructing Yang-Mills actions for Loop algebras. In fact, the Yang-Mills
action for the loop algebra obtained from the Lie algebra $SU\left(  N\right)
$ is%
\begin{equation}
S=-\frac{1}{4}\int d^{d}x\left\langle \mathbf{F}_{\mu\nu}\mathbf{F}^{\mu\nu
}\right\rangle ,\label{eq:4}%
\end{equation}
where $\left\langle \cdots\right\rangle $ corresponds to the invariant tensor
defined in (\ref{eqtres}).

The loop-algebra valued curvature $\mathbf{F}_{\mu\nu}=\overset{+\infty
}{\underset{n=-\infty}{\sum}}F_{\mu\nu\left(  n\right)  }^{A}\mathbf{T}%
_{A}^{n}$ is given by
\begin{equation}
\mathbf{F}_{\mu\nu}=\partial_{\mu}\mathbf{A}_{\nu}-\partial_{\nu}%
\mathbf{A}_{\mu}+i\left[  \mathbf{A}_{\mu},\mathbf{A}_{\nu}\right]  ,
\end{equation}
where $\mathbf{A}_{\mu}=\overset{+\infty}{\underset{n=-\infty}{\sum}}%
A_{\mu\left(  n\right)  }^{B}\mathbf{T}_{B}^{n}$ is the gauge potential. Thus
the components of the curvature are given by
\begin{equation}
F_{\mu\nu\left(  n\right)  }^{B}=\partial_{\mu}A_{\nu\left(  n\right)  }%
^{B}-\partial_{\nu}A_{\mu\left(  n\right)  }^{B}-f_{CDB}\overset{+\infty
}{\underset{k=-\infty}{\sum}}A_{\mu\left(  n-k\right)  }^{C}A_{\nu\left(
k\right)  }^{D}.
\end{equation}

The corresponding gauge transformations are
\begin{equation}
\delta\mathbf{A}_{\mu}=\partial_{\mu}\mathbf{\Lambda}+i\left[  \mathbf{A}%
_{\mu},\mathbf{\Lambda}\right]  ,
\end{equation}
where $\mathbf{\Lambda}$ is the gauge parameter. In terms of its components we
can write%
\begin{equation}
\delta A_{\mu\left(  n\right)  }^{B}=\partial_{\mu}\Lambda_{\left(  n\right)
}^{B}-f_{CDB}\overset{+\infty}{\underset{k=-\infty}{\sum}}A_{\mu\left(
n-k\right)  }^{C}\Lambda_{\left(  k\right)  }^{D}.
\end{equation}

Introducing the invariant tensor (\ref{eqtres}) into the action (\ref{eq:4}),
we obtain the Yang-Mills action for a loop algebra (\ref{eqdos})
\begin{equation}
S=-\frac{1}{8}\overset{+\infty}{\underset{r,n,=-\infty}{\sum}\alpha^{\left(
r\right)  }}\int d^{d}xF_{\mu\nu\left(  n\right)  }^{A}F_{\left(  r-n\right)
}^{A\mu\nu}.
\end{equation}

The coefficients $\alpha^{\left(  r\right)  }$ are arbitrary. The action will
be invariant independently of the values that the above mentioned coefficients acquire.

\section{\textbf{Comments}}

We have generalized the S-expansion method of Lie algebras to the group
manifold, where the expansion occurs at the level of the group coordinates as
is stated in the theorem 4. The consistency with the dual formulation of the
S-expansion procedure has also been checked. This generalization has been
useful to show that the Lie algebras of smooth mappings of some manifold $M$
onto a finite-dimensional Lie algebra , such as the so called Loop algebras,
can be interpreted as particular kinds of S-expanded Lie algebras.

The main results of this paper are: the interpretation of the
infinite-dimensional Lie-algebra $Map(S^{n},g)$ as a particular kind of
S-expanded Lie algebras, where the notion of the S-expansion is generalized to
include the group manifold. The obtaining of invariant tensors for loop
algebras (considered as expanded algebras) allow constructing gauge theories
such as Yang-Mills or Chern-Simons theories invariant under loop-like
algebras. \ \ 

\bigskip

This work was supported in part by FONDECYT through Grants \#s 1080530 and
1070306 and in part by Direcci\'{o}n de Investigaci\'{o}n, Universidad de
Concepci\'{o}n through Grant \# 208.011.048-1.0. A.P. wishes to thank S.
Theisen for his kind hospitality at the MPI f\"{u}r Gravitationsphysik in
Golm, where part of this work was done. He is also grateful to the German
Academic Exchange Service DAAD and the Comisi\'{o}n Nacional de
Investigaci\'{o}n Cient\'{\i}fica y Tecnol\'{o}gica CONICYT, Chile, for
financial support. R.C was supported by grants from the Comisi\'{o}n Nacional
de Investigaci\'{o}n Cient\'{\i}fica y Tecnol\'{o}gica CONICYT and from the
Universidad de Concepci\'{o}n, Chile.

\bigskip


\begin{thebibliography}{99}                                                                                               %
\bibitem {majundar}P. Majumdar, J. Math. Phys. 34 (1993) 2059.

\bibitem {fialowski}A. Fialowski and M. de Montigny, J. Phys. A: Math. Gen. 38
(2005) 6335.

\bibitem {hatsuda}M. Hatsuda and M. Sakaguchi, Prog. Theor. Phys. 109 (2003)
853. $\left[  \text{arXiV:hep-th/0106114}\right]  .$

\bibitem {azcarr}J.A. de Azcarraga, J.M. Izquierdo, M. Picon and O. Varela,
Nucl. Phys. B662 (2003) 662. $\left[  \text{arXiV:hep-th/0212347}\right]  .$

\bibitem{azc2}J.A. de Azcarraga, J.M. Izquierdo, M. Picon and O. Varela,
Class. Quant. Grav. 21 (2004)  S1375-1384. $\left[  \text{arXiV:hep-th/0401033}\right]  .$


\bibitem {irs}F. Izaurieta, E. Rodriguez, P. Salgado, J. Math. Phys. 47 (2006)
123512 $\left[  \text{arXiV:hep-th/0606215}\right]  .$

\bibitem {irs07}F. Izaurieta, E. Rodriguez, P. Salgado, Lett. Math. Phys.
80\ (2007) 127$\left[  \text{arXiV:hep-th/0106114}\right]  .$

\bibitem {irs08}F. Izaurieta, E. Rodriguez, P. Salgado, Eur. Phys.J C 54
(2008) 675$\left[  \text{arXiV:hep-th/0606225}\right]  .$

\bibitem {iprs09-2}F. Izaurieta, P. Minning, A. Perez, E. Rodriguez, P.
Salgado, Phys. Lett. B 678 (2009) 213 $\left[  \text{arXiV:hep-th/09052187}%
\right]  $.

\bibitem {cms}R. Caroca, N. Merino, P. Salgado, J. Math. Phys. 50 (2009) 013503. $\left[  \text{arXiV:1004.5213
[math-ph]}\right].$

\bibitem {cmps}R. Caroca, N. Merino, A. P\'erez, P. Salgado, J. Math. Phys. 50
(2009) 123527. $\left[  \text{arXiV:1004.5503
[hep-th]}\right]  .$

\bibitem {iprs09-1}F. Izaurieta, A. P\'{e}rez, E. Rodr\'{\i}guez and P.
Salgado, J. Math. Phys. 50 (2009) 073511. $\left[  \text{arXiV:0903.4712
[hep-th]}\right]  .$

\bibitem {goddard}P. Goddard and D. Olive, "Kac-Moody and Virasoro Algebras"
World Scientific 1988.
\end{thebibliography}
\end{document}